\newtheorem{theorem}{Theorem}
\newtheorem{lemma}[theorem]{Lemma}
\newtheorem{proposition}[theorem]{Proposition}
\theoremstyle{definition}
\theoremstyle{remark}\newtheorem{remark}[theorem]{Remark}
\def\Ito/{It\^o}
\def\aa{\mathbf{a}}
\def\bb{\mathbf{b}}
\newcommand{\arxiv}[1]{{\tt \href{http://arxiv.org/abs/#1}{arXiv:#1}}}
\title[A duality principle for selection games]{A duality principle for selection games}
\author[Levine, Sheffield, Stange]{Lionel Levine, Scott Sheffield and Katherine E. Stange}
\date{February 13, 2012}
\address{Department of Mathematics, Cornell University, Ithaca, NY 14853. \url{http://www.math.cornell.edu/~levine}}
\email{levine@math.cornell.edu}
\address{MIT Department of Mathematics, 77 Massachusetts Ave., Cambridge, MA 02139. \url{http://math.mit.edu/~sheffield}}
\email{sheffield@math.mit.edu}
\address{Department of Mathematics, Stanford University, 450 Serra Mall, Building 380, Stanford, CA 94305 USA. \url{http://math.stanford.edu/~stange}}
\email{stange@math.stanford.edu}
\subjclass[2010]{91A10, 91A18, 91A06, 91A50} 
\keywords{agreement, cake cutting, fair division, sequential selection, subgame perfect equilibrium}
\thanks{
The first author was supported by NSF MSPRF 0803064.
The second author was partially supported by NSF grant DMS 0645585.
The third author was supported by NSF MSPRF 0802915.}
\begin{document}

\begin{abstract}
A dinner table seats $k$ guests and holds $n$ discrete morsels of food.  Guests select morsels in turn until all are consumed.  Each guest has a ranking of the morsels according to how much he would enjoy eating them; these rankings are commonly known.

A {\em gallant knight} always prefers one food division over another if it provides strictly more enjoyable collections of food to one or more {\em other} players (without giving a less enjoyable collection to any other player) even if it makes his own collection less enjoyable.  A {\em boorish lout} always selects the morsel that gives him the most enjoyment on the current turn, regardless of future consumption by himself and others.

We show the way the food is divided when all guests are gallant knights is the same as when all guests are boorish louts but turn order is reversed.  This implies and generalizes a classical result of Kohler and Chandrasekaran (1971) about two players strategically maximizing their own enjoyments.  We also treat the case that the table contains a mixture of boorish louts and gallant knights.

Our main result can also be formulated in terms of games in which selections are made by groups.  In this formulation, the surprising fact is that a group can always find a selection that is simultaneously optimal for each member of the group.
\end{abstract}

\maketitle

Suppose that $k$ guests are seated at a table containing $n$ discrete morsels of food (sushi rolls, say).  Each player $i$ has a strict ranking $<_i$ of the $n$ morsels according to how enjoyable they are to eat.  Players take turns selecting and consuming a morsel from the table, according to a fixed turn order $P_1, P_2, \ldots, P_m$, where $m \leq n$, and $P_t \in \{1,2,\ldots,k\}$ is the player who selects a morsel on the $t$\,th turn.

Our dinner guests do not simply play to maximize their own enjoyment.  Instead, each guest conforms to one of two stereotypes: he is either a \emph{gallant knight} or a \emph{boorish lout}, as explained below.

A \emph{plate} is a subset of the morsels.  Each player's ranking of the morsels determines a corresponding partial ordering on plates of a fixed size according to \emph{pairwise comparison}: Given two plates $A$ and $A'$ of the same size, write $A \leq_i A'$ if there is a bijection $f : A \to A'$ such that $a \leq_i f(a)$ for all $a \in A$.

Given a sequence $\aa = (a_1, \ldots, a_m)$ of morsel selections representing the play of a game, let $A_i(\aa) = \{a_t \mid P_t = i\}$ be the plate eaten by player $i$.  For each player we define two partial orders on play sequences, her \emph{knight order} and \emph{lout order}.
We define $\aa < \bb$ in player $i$'s knight order if either
	\begin{enumerate}
	\item[(K1)] $A_j(\aa) \leq_j A_j(\bb)$ for all $j \neq i$ and $A_j(\aa) <_j A_j(\bb)$ for at least one $j \neq i$; or
	\item[(K2)] $A_j(\aa) = A_j(\bb)$ for all $j \neq i$ and $A_i(\aa) <_i A_i(\bb)$.
	\end{enumerate}
If neither (K1) nor (K2) holds, and neither holds with the roles of $\aa$ and $\bb$ reversed, and $\aa \neq \bb$,
then $\aa$ and $\bb$ are incomparable in player $i$'s knight order.  This could happen for various reasons:
	\begin{itemize}
	\item Some player $j \neq i$ receives incomparable plates $A_j(\aa)$ and $A_j(\bb)$; or
	\item $A_j(\aa) < A_j(\bb)$ and $A_{j'}(\aa) > A_{j'}(\bb)$ for some $j,j' \neq i$; or
	\item $A_j(\aa) = A_j(\bb)$ for all players $j$.
	\end{itemize}

Player $i$ is a \emph{gallant knight} if his (partial or total) preference ordering on play sequences extends his knight order.  Thus, a gallant knight must prefer $\bb$ to $\aa$ if all \emph{other} players receive more enjoyable plates in $\bb$ than in $\aa$.  He must also prefer $\bb$ to $\aa$ if all other players receive identical plates in $\aa$ and $\bb$ but he himself receives a more enjoyable plate in $\bb$ (This can only happen if $m < n$, i.e., if not all food is consumed during the game).  Our definition only requires a gallant knight to be ``gallant'' in a fairly weak sense: he prefers $\bb$ over $\aa$ whenever $\bb$ is {\em obviously} more enjoyable for the other players given their food preference rankings.

We define $\aa < \bb$ in player $i$'s lout order if the following holds:
\begin{enumerate}
\item[(L)] $a_s=b_s$ for $1\leq s\leq t-1$ and $a_t <_i b_t$, for some $t$ such that $P_t=i$.
\end{enumerate}
Player $i$ is a \emph{boorish lout} if his preference ordering on play sequences extends his lout order.  In other words, a boorish lout cares only about his own enjoyment and discounts his own future enjoyment so much that he always selects the morsel that maximizes his enjoyment on the current turn (regardless of the effect this may have on future turns).

\begin{theorem}
\label{t.main}
Under optimal play, the division of the food among gallant knights playing in turn sequence $P_1, \ldots, P_m$ is the same as the division of the food among boorish louts playing in the reversed turn sequence $P_m, \ldots ,P_1$.
\end{theorem}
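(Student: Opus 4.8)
The plan is to induct on the number of turns $m$, peeling off the \emph{first} move of the forward knight game and matching it against the \emph{last} pick of the reversed lout game. Interpret ``optimal play'' as the outcome of backward induction (subgame perfection) with respect to each player's knight order. Let $L$ denote the deterministic division produced by boorish louts playing in the order $P_m,\dots,P_1$: player $P_m$ takes his favorite morsel of all of $M$, then $P_{m-1}$ his favorite of what remains, and so on, with $P_1$ choosing last. The base case $m=1$ is immediate: the lone mover $P_1$ changes no other player's plate, so his knight order collapses to (K2) and he takes his favorite morsel, exactly as a lout would.

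For the inductive step, fix the first move. If knight $P_1$ takes morsel $d$, then the residual game on the turn sequence $P_2,\dots,P_m$ and the morsels $M\setminus\{d\}$ is an honest $(m-1)$-turn forward knight game: the fixed allocation of $d$ to $P_1$ shifts $P_1$'s plate by a common element, and I would first verify that the pairwise order $\leq_{P_1}$ is cancellative under adjoining a common morsel, so that this offset changes none of the knight-order comparisons among residual outcomes (this point requires care precisely when $P_1$ reappears among $P_2,\dots,P_m$). The induction hypothesis then tells us the residual game resolves to the reversed-lout division of $(P_m,\dots,P_2)$ on $M\setminus\{d\}$, and everything reduces to one question about the greedy lout procedure: which first move $d$ is optimal in $P_1$'s knight order, and does it reproduce $L$?

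The engine is a coupling lemma for the greedy (serial dictatorship) procedure. Running it with a fixed turn sequence on a pool $M$ and on $M\setminus\{d\}$ simultaneously, I maintain the invariant that the two available sets always differ by exactly one ``extra'' morsel; a turn-by-turn comparison then yields (i) if $d$ is never picked from $M$ (a \emph{leftover}), the two runs are identical, and (ii) in general, at each of his turns a player's pick from $M$ is $\geq_j$ his pick from $M\setminus\{d\}$, so his aggregate plate from $M$ dominates that from $M\setminus\{d\}$ in $\leq_j$ --- removing a morsel weakly hurts everyone (resource monotonicity). Now let $R$ be the lout run of $P_m,\dots,P_2$ on the \emph{full} pool $M$ (this is the reversed-lout game with its final pick by $P_1$ deleted), and let $d_1$ be $P_1$'s favorite among the morsels $R$ leaves uneaten; then $d_1$ is exactly the morsel $P_1$ receives in $L$. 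Taking $d=d_1$ removes a leftover, so by (i) the other players get precisely their $R$-plates and the total division is $L$. Any other first move $d$ removes a morsel, so by (ii) it gives every other player a weakly worse plate than in $R$; hence $L$ weakly dominates it in $P_1$'s knight order, and strictly via (K1) whenever some other player is strictly hurt. This makes $d_1$ a greatest element of $P_1$'s knight order over his first moves, and optimal knight play selects $L$.

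The main obstacle is the tie case, where a deviation $d$ happens to leave every \emph{other} plate unchanged, forcing a fallback to (K2) and a comparison of $P_1$'s own plate. This is exactly where repeated players bite: $P_1$ then also plays in the residual game, and $d$ may be a morsel he was destined to take anyway, so one must rule out that grabbing it early strictly improves his own plate. Here I expect to invoke the coupling once more to show that $P_1$'s residual picks off $M\setminus\{d\}$ are dominated turn-by-turn by his picks off $M$, and combine this with the fact that $d_1$ is his \emph{favorite} leftover, to conclude that the deviation either strictly lowers $P_1$'s own plate (so $d_1$ wins by (K2)) or reproduces $L$ verbatim. Making this tie analysis airtight for an arbitrary turn sequence, together with the cancellativity reduction, is the delicate part; the altruism built into the knight order is what ultimately forces a \emph{single} first move to be simultaneously optimal for all the other players, as promised by the group reformulation.
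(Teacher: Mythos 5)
Your proposal takes a genuinely different route from the paper, and its skeleton is sound. The paper proves the more general Theorem \ref{t.main.full} by a purely local argument: Lemma \ref{switchlemma} transposes the \emph{last} knight turn with the lout turn immediately following it (exploiting that the continuation after the last knight turn is deterministic), so knights can be bubbled to the end of the turn order one at a time and, once last, converted into louts. You instead peel off the \emph{first} knight move and induct on $m$, which replaces the paper's local swap by two global facts about the greedy lout procedure: cancellativity of pairwise plate comparison under adjoining a common morsel (true, by rerouting the bijection when $f(d)\neq d$), and resource monotonicity via the one-extra-morsel coupling (also true). Your route has independent appeal: it exhibits directly why a single first move is simultaneously optimal for all the other players (essentially the content of Theorem \ref{t.group}), and the coupling lemma is of interest in itself. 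The costs are that you obtain only Theorem \ref{t.main} rather than the mixed knight/lout generalization, which the paper gets for free, and that you still owe the reduction (Proposition \ref{p.orderequivalence}) from preferences \emph{extending} the knight order to the knight order itself, which your reading of ``optimal play'' assumes implicitly.

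One step, as you literally describe it, does not close: the tie case. Turn-by-turn domination of $P_1$'s residual picks suggests matching each pick off $M\setminus\{d\}$ with the corresponding pick off $M$ and matching $d$ with $d_1$; but $d$ may well satisfy $d >_{P_1} d_1$ (for instance, $d$ can be $P_1$'s favourite morsel overall, one he was destined to take during $R$ anyway), so this matching does not prove $\{d\}\cup R'_{P_1} \leq_{P_1} \{d_1\}\cup R_{P_1}$. The repair is bookkeeping already latent in your coupling. The divergent picks form a chain $d=e_0,e_1,\dots,e_q$ of \emph{distinct} morsels, where at the $s$th divergent turn the run on $M$ picks $e_{s-1}$, the run on $M\setminus\{d\}$ picks $e_s$, and $e_q$ is left uneaten by $R$. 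Distinctness forces any player who owns a divergent turn to end with different plates in the two runs; hence in the tie case (all other plates unchanged) every divergence occurs at a turn of $P_1$, which gives the set identity $\{d\}\cup R'_{P_1} = R_{P_1}\cup\{e_q\}$. Now match $R_{P_1}$ to itself and $e_q$ to $d_1$: since $e_q$ is a leftover of $R$ and $d_1$ is $P_1$'s favourite leftover, $e_q\leq_{P_1} d_1$, so the deviation is weakly worse for $P_1$ under (K2), with equality of divisions exactly when $e_q=d_1$. This also settles the point your proposal leaves unstated but the theorem requires: every deviation is either strictly dominated by $L$ or yields the same division as $L$, so \emph{every} maximal choice at the root, not just $d_1$, produces the division $L$. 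With this repair your induction closes.
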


By \emph{optimal play} we mean the outcome of a subgame perfect Nash equilibrium, which makes sense for either partially or totally ordered preferences.  (We will recall the definition below.)  By the \emph{division of the food} we mean the final partition of morsels among the players, without regard to the order in which they were chosen. We will prove a more general statement that also applies to mixed tables of knights and louts (and even to individuals who act as knights on some turns and louts on others), which we state as Theorem \ref{t.main.full} below.  Although boorish louts and gallant knights are very different, the theorem gives a sense in which they are ``dual'' to each other.

Nash equilibria in general are hard to compute.  In our setting, we have a sequential game of perfect information which has a succinct description (i.e., the length of the game description is only logarithmic in the size of the game tree).
For such games, the problem of determining whether a given outcome is an optimal play is $\textsc{Pspace}$-complete by a reduction to the quantified boolean formula problem \cite[Theorem 5.8]{AGS}.  The significance of Theorem \ref{t.main} and Theorem \ref{t.main.full} is that they allow us to compute the food division resulting from optimal play when some or all of the players are gallant knights -- namely, reverse the turn order and play as boorish louts.  The food division when all players are boorish louts is trivial to compute in linear time (in the size of the input, $nk$).  Theorem \ref{t.main} and Theorem \ref{t.main.full} are strong in the sense that our definition of gallant knights makes only fairly weak assumptions (i.e., we assume only that their preferences extend fairly weak partial orderings).  It is somewhat surprising that these weak assumptions are sufficient to determine the division of food under any optimal play.

\subsection*{Related work}
In a two player game where all morsels are ultimately consumed and both players are gallant knights, we can swap the roles of $<_1$ and $<_2$ to obtain the game in which two players take turns selecting morsels each with the aim of maximizing her own total enjoyment (but unlike boorish louts, they do not discount the future). Kohler and Chandrasekaran \cite{KC} described the optimal play for this latter game in 1971 (see also \cite[Ch.\ 2]{BT2} and \cite{LS} for a recent survey).

Theorem~\ref{t.main} thus generalizes the $2$-player theorem of \cite{KC} to a class of games with any number of players.
Curiously, we do not know of any efficiently computable equilibrium for the most obvious generalization of the Kohler-Chandrasekaran game, in which $k \geq 3$ players take turns selecting items and each tries to maximize his own total enjoyment. Brams and Straffin point out a number of pathologies that arise in this game \cite{BS}.

\subsection*{Enjoyments and utilities}
One concrete way to realize an extension of the pairwise comparison ordering on plates is to assign enjoyments $e_i(a)$ which are real numbers representing the value of morsel $a$ to player $i$.  We require $e_i(a) \neq e_i(b)$ for $a \neq b$, and we can say that one plate is more enjoyable than another for player $i$ if the sum of the enjoyments is larger.  Given a play sequence $\aa = (a_1, \ldots, a_m)$, we let $E_i(\aa) = \sum_{t \; : \; P_t=i} e_i(a_t)$ denote the {\em total enjoyment} that the $i$th player receives.
Each leaf $\aa$ of the game tree in Figure~\ref{fig:notunique} is labeled by the enjoyment vector $(E_1(\aa),E_2(\aa))$.

We also remark that a concrete way to realize a preference ordering on game outcomes for the $i$th player is to use a {\em utility function} that assigns a real number to each play sequence $\aa$.  An example of a gallant knight's utility function is
\[U_i(\aa) = \alpha E_i(\aa)+ \sum_{j \not = i} E_j(\aa).\]
for sufficiently small $\alpha>0$.  Since Theorems \ref{t.main} and \ref{t.main.full} hold for any extensions of the knight and lout orderings, in particular they apply when preferences are defined by utilities in this way.

\subsection*{Equilibria for games with partially ordered preferences}
\begin {figure}[!ht]
\centering
\includegraphics [width=4in]{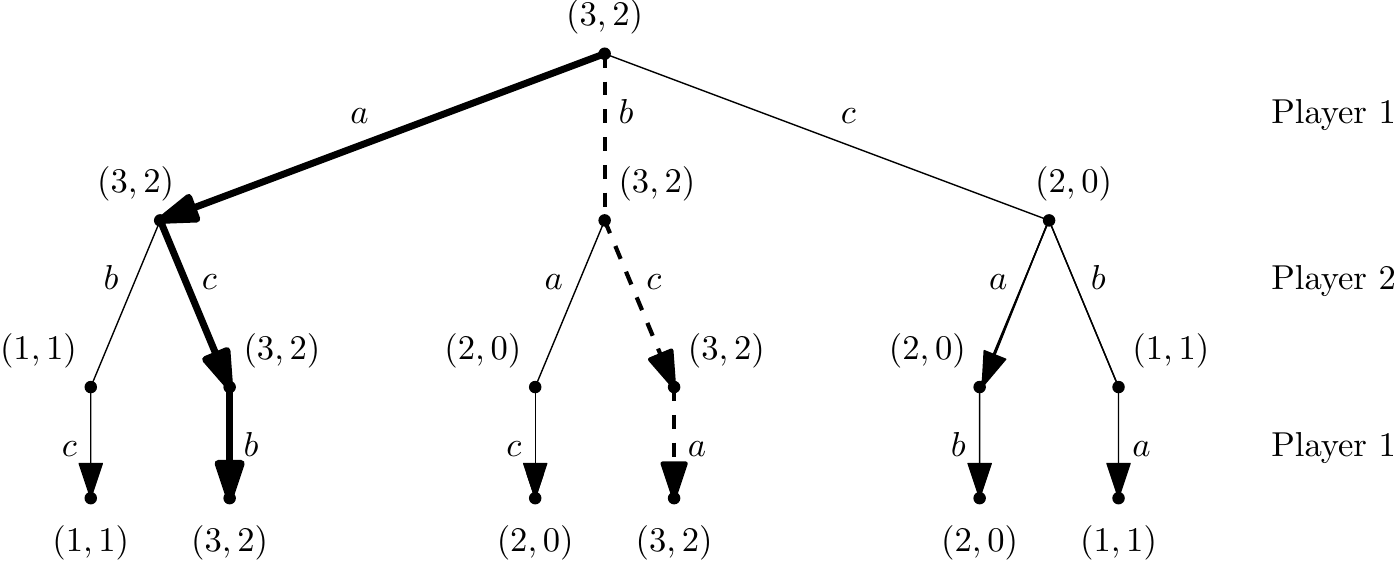}
\caption {\label{fig:notunique} Example of a game tree for $2$ gallant knights and $3$ morsels which has more than one optimal play.  The morsels $a$, $b$ and $c$ have enjoyment vectors $(1,0)$, $(2,1)$ and $(0,2)$ respectively.  Each edge is labeled with the morsel eaten by the moving player.  Each decision node contains one arrow pointing to a child.  The collection of all arrows represents an equilibrium pair of strategies, one for each player.  Each vertex is labeled by the enjoyment vector of the contingent outcome determined by the equilibrium.  The path of arrows from the root (left thickened path) is an optimal play.  The pair of strategies would still be an equilibrium if Player 1 chose the middle child instead of the left child for the first move; the resulting optimal play is the dotted path from the root.  Note that the division of food is the same in either case: Player 1 eats morsels $a$ and $b$ and Player 2 eats morsel $c$; only the order of eating changes.
}
\label{f.nonunique}
\end{figure}

The games we consider will involve finitely many states and outcomes and can be represented by decision trees such as Figure \ref{fig:notunique}.  A game outcome is a leaf of the tree.  Each player $i$ has a partial ordering $<_i$ on game outcomes.
A \emph{strategy} for player $i$ is a way of assigning a decision to each game position in which player $i$ has the move.  Each game position is represented by a node of the tree, and player $i$'s decision from a given position is represented by drawing an arrow from the corresponding node to one of its children.  A $k$-tuple of strategies (one strategy for each player) determines a directed path from the root of the tree to a leaf representing the outcome of the game.  The arrows also determine, for each node of the tree, a ``contingent outcome,'' which is the leaf reached by following the arrows starting from that node.

Given a $k$-tuple of strategies $s=(s_1,\ldots,s_k)$, for each node $x$ of the game tree let $O_s(x)$ be the set of contingent outcomes corresponding to the children of $x$.
 We say that $s$ is a \emph{subgame perfect Nash equilibrium} if for each node $x$ of the game tree, the contingent outcome of $x$ is a maximal element of $O_s(x)$ with respect to $<_i$, where $i$ is the player to move at $x$.
 A root-to-leaf trajectory is called an \emph{optimal play} if it arises from such an equilibrium.  It is always possible to construct such an equilibrium inductively (drawing decision arrows one layer at a time starting from the bottom of the tree).
If each player has a total preference ordering on game outcomes, then this equilibrium is unique.  If the orderings $<_i$ on game outcomes are partial instead of total,
then the equilibrium may not be unique.

The following proposition is immediate from the definitions, and relates equilibria for games with partially and totally ordered preferences on outcomes.

\begin{proposition} \label{p.orderequivalence} Suppose that $A$ and $B$ are $k$-tuples of partial orderings on game outcomes (one for each player), and each player's ordering in $B$ extends his ordering in $A$.  Then any equilibrium for $B$ is also an equilibrium for $A$. Conversely every equilibrium for $A$ is also an equilibrium for {\em some} $k$-tuple of total orderings that extends $A$.
\end{proposition}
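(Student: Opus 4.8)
\medskip
\noindent The two halves are of very different character, so I would prove them separately. For the forward implication, fix an equilibrium $s$ for $B$ and any node $x$ at which player $i$ moves, and let $c$ be the contingent outcome of $x$. By hypothesis $c$ is $<^B_i$-maximal in $O_s(x)$. I claim it is also $<^A_i$-maximal: if some $c'\in O_s(x)$ had $c<^A_i c'$, then since $<^B_i$ extends $<^A_i$ we would get $c<^B_i c'$, contradicting $B$-maximality. As $x$ was arbitrary, $s$ is an equilibrium for $A$. The point is simply that \emph{shrinking} a preference order can only enlarge the set of maximal elements of each $O_s(x)$, so maximality is preserved when passing from the finer tuple $B$ to the coarser tuple $A$.

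For the converse, fix an equilibrium $s$ for $A$. I would construct, \emph{one player at a time}, a total order $\leq^\ast_i$ extending $<^A_i$ with the property that at every node $x$ where $i$ moves the contingent outcome $c$ of $x$ is strictly larger than the contingent outcome of every other child of $x$. Granting this, $s$ is immediately an equilibrium for the tuple $(\leq^\ast_1,\dots,\leq^\ast_k)$: the arrows, hence all contingent outcomes, are unchanged, and at each node the realized contingent outcome now dominates its siblings and so is maximal. To build $\leq^\ast_i$ I would form the relation $\prec_i$ generated by $<^A_i$ together with all pairs $(c',c)$ in which $x$ ranges over the nodes where $i$ moves, $c$ is the contingent outcome of $x$, and $c'$ is the contingent outcome of some other child of $x$; after taking the transitive closure, the standard fact that every strict partial order extends to a total order (Szpilrajn) produces the desired $\leq^\ast_i$.

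The entire content therefore lies in verifying that the transitive closure of $\prec_i$ is a strict partial order, i.e.\ that it contains no cycle, and this is the step I expect to be the main obstacle. The equilibrium hypothesis is exactly what prevents short cycles: $<^A_i$-maximality of $c$ in $O_s(x)$ says that no sibling $c'$ satisfies $c<^A_i c'$, so each added edge $c'\prec_i c$ never reverses an existing $<^A_i$-relation, which rules out $2$-cycles. To exclude longer cycles I would bring in the tree geometry: for an added edge $c'\prec_i c$ the node $x$ is the lowest common ancestor of $c$ and $c'$, its contingent outcome is $c$, and the highest ancestor still having contingent outcome $c$ is a strict ancestor of the corresponding node for $c'$; this yields a depth potential that strictly decreases along each added edge, from the dominated to the dominating outcome. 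The genuinely delicate case is a cycle that \emph{alternates} added edges with $<^A_i$-edges, where the potential alone does not suffice and one must play it off against maximality; handling this interaction is where the real work sits. Once acyclicity is secured, Szpilrajn finishes the converse, and together the two halves identify the equilibria for $A$ with the equilibria arising from total extensions of $A$.
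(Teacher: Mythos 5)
Your first half is correct, and it is in fact the only half the paper ever uses: the paper offers no proof of Proposition~\ref{p.orderequivalence} at all (it is declared immediate from the definitions), and only the forward implication is invoked later, in the proof of Theorem~\ref{t.main.full}. Your reduction of the converse is also the right equivalence: since a strict total order on a finite set has a unique maximal element, $s$ is an equilibrium for a tuple of total extensions if and only if, for each player $i$, the relation $\prec_i$ you define --- $<^A_i$ together with all pairs (sibling contingent outcome) $\prec_i$ (chosen contingent outcome) at nodes where $i$ moves --- is acyclic. Your exclusion of $2$-cycles via the equilibrium condition, and of cycles consisting purely of added edges via your depth potential (the topmost node with a given contingent outcome strictly rises along each added edge), are both sound.

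However, the case you flagged as ``where the real work sits'' is not a hard step awaiting a clever argument: it is false, and therefore so is the converse as literally stated. Consider a root $r$ where player $2$ moves, with children $x$ and $x'$ at both of which player $1$ moves; let $x$ have leaf children $c_1,c_2$ and let $x'$ have leaf children $c_3,c_4$. Let player $2$'s partial order be empty, and let player $1$'s partial order consist of exactly the two relations $c_2 <_1 c_3$ and $c_4 <_1 c_1$ (a valid strict partial order). The profile in which player $1$ chooses $c_2$ at $x$ and $c_4$ at $x'$, and player $2$ chooses either child at $r$, is a subgame perfect equilibrium for these partial orders, because the only outcomes dominating $c_2$ and $c_4$, namely $c_3$ and $c_1$, are not available at the nodes where they would matter. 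Yet for any total extension $T_1$ of $<_1$, equilibrium would force $c_1 <_{T_1} c_2$ (at $x$) and $c_3 <_{T_1} c_4$ (at $x'$), which together with the inherited relations $c_2 <_{T_1} c_3$ and $c_4 <_{T_1} c_1$ forms a cycle --- precisely an alternating $4$-cycle in your relation $\prec_1$. So no total extension makes this profile an equilibrium, and the acyclicity you needed genuinely fails. This does not damage the paper, since its main results rest only on the forward half; and in the example above the equilibrium \emph{play} (as opposed to the full strategy profile) does still arise from a suitable total extension, so a weaker, play-level statement may be what the authors intended. But the statement you set out to prove, at the level of strategy profiles, cannot be proved, and the obstacle you honestly identified is exactly where it breaks.
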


Theorem~\ref{t.main} states that the knight and lout partial orderings contain enough information to determine how the food is divided under any optimal play (though, as Figure~\ref{f.nonunique} illustrates, they do not necessarily determine the {\em order} in which selections are made).  Another way to say this is that if we inductively choose a subgame perfect Nash equilibrium with respect to the partial ordering (i.e., at each node we draw an arrow to {\em any one} of the children corresponding to an outcome that is maximal with respect to the deciding player's partial ordering) then the resulting optimal play determines a partition of the morsels that does not depend on any of the choices we made along the way.

The following observation is also immediate.  It says that edges of the decision tree that correspond to unambiguously ``bad'' decisions can be removed without altering the set of optimal plays.

\begin{proposition} \label{p.removingedges} Suppose that $e_1, e_2, \ldots$ is a finite sequence of edges of the decision tree (corresponding to particular decisions made at nodes $x_1, x_2, \ldots$) and that there is no optimal play for any subgame beginning at any $x_j$ in which a player uses the edge $e_j$.  Then removing $e_1, e_2, \ldots$ (and the subtrees beneath them) from the decision tree does not alter the set of optimal plays.
\end{proposition}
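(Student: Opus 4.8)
The plan is to prove the two inclusions between the set of optimal plays of the original tree $T$ and the set of optimal plays of the reduced tree $T'$ obtained by deleting $e_1,e_2,\ldots$ together with the subtrees beneath them. Throughout I will track the \emph{viable} outcomes at a node: for a node $x$ let $W(x)$ be the set of contingent outcomes of $x$ that arise under some subgame perfect equilibrium of the subgame rooted at $x$. Because distinct subtrees are independent, $W(x)$ obeys the expected recursion: if $x$ is internal with mover $i$ and children $c$, then $o\in W(x)$ exactly when $o\in W(c)$ for some child $c$ and no child $c'$ \emph{strictly dominates} $o$, meaning $o<_i o'$ for every $o'\in W(c')$. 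The hypothesis that no optimal play of the subgame at $x_j$ uses $e_j$ says precisely that the child $c_j$ reached by $e_j$ is \emph{never chosen}: every $o\in W(c_j)$ is strictly dominated by some sibling. A key lemma, which I would establish first by induction from the leaves, is that deleting never-chosen children leaves $W(x)$ unchanged at every surviving node; its proof is the transitivity-and-finiteness argument described below.

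The forward inclusion is the easy half. Let $s$ be any subgame perfect equilibrium of $T$ with optimal play $\pi$. Restricting $s$ to the subgame at any node is again a subgame perfect equilibrium of that subgame, since the defining condition is checked node by node; hence, by hypothesis, the arrow $s$ draws at each $x_j$ is not $e_j$. Consequently the arrows of $s$ never enter a deleted subtree, so every contingent outcome of $s$ — in particular $\pi$ — already lies in $T'$, and deleting the $e_j$ only shrinks the child sets $O_s(x_j)$, preserving every maximality condition. Therefore $s$ restricts to an equilibrium of $T'$ with the same optimal play $\pi$.

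For the reverse inclusion I would start from an equilibrium $s'$ of $T'$ with optimal play $\pi'$ ending at a leaf $o$, and extend it to $T$ by choosing, on each restored subtree below $c_j$, some subgame perfect equilibrium whose contingent outcome $p_j$ satisfies $o\not<_{i_j}p_j$, where $i_j$ is the mover at $x_j$; any such choice keeps the decision of $s'$ at $x_j$ maximal once $c_j$ is restored, so the extension is an equilibrium of $T$ realizing $\pi'$. The one thing to verify — and the main obstacle — is that such a $p_j$ exists. This is exactly where transitivity and finiteness of the preference orders enter: if instead every viable outcome of the subtree below $c_j$ strictly beat $o$, I would pick a $<_{i_j}$-maximal viable outcome $q$ there; since $c_j$ is never chosen, $q$ is strictly dominated by some sibling subtree $c_m$, so by transitivity every viable outcome of $c_m$ also beats $o$, and chasing through any further deleted siblings (the chain must terminate by finiteness, landing on a surviving sibling) produces a surviving child of $x_j$ all of whose viable outcomes beat $o$ — contradicting the maximality of $o$ in $O_{s'}(x_j)$. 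The identical chain argument proves the invariance lemma of the first paragraph, which is what lets me match viable outcomes, and hence the non-domination conditions defining optimal plays, across $T$ and $T'$.
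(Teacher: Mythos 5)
Your proof is correct in substance, but there is an important point of comparison to make: the paper offers \emph{no} argument for this proposition at all --- it is introduced with the sentence ``The following observation is also immediate,'' and no proof is given. So what you have written is the missing justification, and your decomposition shows that the two halves of the statement have very different weight. The restriction direction (an equilibrium of $T$ never selects a deleted edge, since its restriction to the subgame at $x_j$ is an equilibrium of that subgame; hence it restricts to an equilibrium of $T'$ with the same play) really is immediate, exactly as you say. The extension direction is not: one must rule out that deleting a never-chosen child creates \emph{new} maximal elements at the nodes $x_j$, and this is precisely where your invariance lemma ($W_{T'}(x)=W_T(x)$ at every surviving node) and the domination-chain argument enter. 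That argument genuinely needs transitivity and finiteness of the orders: if $<_i$ were not transitive one could have a deleted leaf $P$ with $R<_i P<_i S$ but $R\not<_i S$, and deleting the edge to $P$ (which no equilibrium uses, since $S$ beats it) would promote $R$ to a new optimal play. So your identification of the existence of a suitable $p_j$ as ``the main obstacle'' is accurate, and your argument --- chain through deleted siblings, strictly increasing in $<_{i_j}$, terminating by finiteness at a surviving sibling that then dominates by transitivity --- is a valid way to clear it. If anything, your proof shows the proposition is less immediate than the paper's phrasing suggests. (One small economy you could note: since a leaf of a tree determines its root-to-leaf path, the invariance lemma applied at the root already gives the proposition; the explicit restriction/extension of equilibria is then a bonus showing that equilibria, not just outcomes, correspond.)

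One notational slip should be repaired. In the extension step you require $o\not<_{i_j}p_j$, where $o$ is the leaf reached by the root optimal play $\pi'$; but the maximality condition to be preserved at $x_j$ concerns the contingent outcome of $x_j$ under $s'$, which differs from $o$ whenever $x_j$ is off the play path (and then $o$ need not even lie in $O_{s'}(x_j)$, so ``the maximality of $o$ in $O_{s'}(x_j)$'' in your final contradiction is not meaningful there). Replacing $o$ by the contingent outcome of $x_j$ under $s'$ throughout the extension argument fixes this; every step, including the chain argument and the appeal to the invariance lemma, then goes through verbatim.
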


\subsection*{Proof of the main result}

We will work in slightly greater generality by allowing {\em mixed nature} individuals.  A mixed nature individual can be understood as a gallant knight who knows she will be \emph{required} to act as a boorish lout on a certain pre-determined subset of her turns called her \emph{lout turns} (i.e., the edges of the decision tree corresponding to other moves she would have made at that step are just removed).  From the outset of the game, both the lout turns of each individual and the rankings $<_i$ on morsels are common knowledge to all players.

\begin{theorem}
  \label{t.main.full}
  In any game of mixed nature individuals, all optimal plays lead to the same division of food, which is the division obtained by pushing all knight turns to the end, reversing the order of the knight turns and then converting all knight turns into lout turns.
  \end{theorem}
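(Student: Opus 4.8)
The plan is to prove Theorem~\ref{t.main.full} by induction on the number $r$ of knight turns, the engine being a single \emph{conversion step} that turns the last knight turn into a lout turn at the end of the order. Writing the turns of a game $G$ as $\tau_1,\dots,\tau_m$, let position $t$ be the \emph{last} knight turn (so $\tau_{t+1},\dots,\tau_m$ are all louts), and let $G'$ be obtained from $G$ by deleting $\tau_t$, shifting the later turns up, and reinserting $\tau_t$ as a \emph{lout} turn at the very end. A short combinatorial check shows that pushing-knights-to-the-end-and-reversing produces the same all-lout order from $G$ as from $G'$, and that $G'$ has one fewer knight turn; so if $G$ and $G'$ have the same set of optimal divisions, the theorem follows by induction, the base case $r=0$ being the all-lout game, whose play is forced to be greedy (by the lout order (L) and subgame perfection) and hence has a unique division. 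Call the assertion that $G$ and $G'$ have the same optimal divisions the \textbf{Conversion Lemma}.

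Next I would reduce the Conversion Lemma to the special case $t=1$: the first turn is a knight turn, say by player $i$, and every later turn is a lout turn. Here $G$ and $G'$ agree on positions $1,\dots,t-1$, and for any fixed sequence of choices there, the remaining game is, in $G$, a game $H$ of the special form (knight then louts) and, in $G'$, the all-lout game $H'$ got from $H$ by moving its knight turn to the end. Granting the special case, $H$ and $H'$ have the same \emph{unique} division, so the map ``prefix $\mapsto$ division of the remainder'' is identical for $G$ and $G'$; since the two games also present the same players with the same preferences on positions $1,\dots,t-1$, they pose identical decision problems there and hence have identical optimal divisions. Checking that one may legitimately replace the tail subgame by its unique outcome, respecting subgame perfection with partially ordered preferences, is the fiddly bookkeeping of this step.

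It remains to treat the base case, which is where the real content lies. Let the greedy play of $\tau_2,\dots,\tau_m$ on the full morsel set $S$ give morsels $O$ to the players $j\neq i$, morsels $P_i$ to player $i$ at his \emph{later} lout turns, and leftover set $L$; set $a^\ast=\max_{<_i}L$. I would first prove a \emph{monotonicity lemma} for greedy play: if $S'\subseteq S$ then, running a fixed lout order on each, every player's plate from $S'$ is $\le$ his plate from $S$. This has a one-line coupling proof: by induction on the turns the available set in the $S'$-run stays a subset of the available set in the $S$-run, so at each step the $S'$-mover picks the favorite of a subset and does weakly worse. Two consequences are immediate: removing a leftover morsel changes no greedy pick (the equality case), and --- since every element of $L$ is available at each of $i$'s later turns and so is beaten there by what $i$ actually took --- every element of $P_i$ is $>_i$ every element of $L$. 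As $S\setminus O=P_i\cup L$, this makes $P_i\cup\{a^\ast\}$ exactly the $<_i$-top $|P_i|+1$ morsels of $S\setminus O$, which is precisely the plate $i$ receives in $G'$.

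Finally I would analyze the gallant knight's opening move in $G$. If $i$ opens with morsel $a$, the louts play greedily on $S\setminus a$, giving a well-defined outcome $D_a$; by monotonicity each $j\neq i$ receives in $D_a$ a plate no larger (in $<_j$) than his unperturbed greedy plate, with equality for every $j$ when $a=a^\ast$ (leftover removal). Thus no move makes any other player strictly better than $a^\ast$ does, so by (K1) the move $a^\ast$ is maximal in $i$'s knight order; conversely any maximal move must tie every other player at his greedy plate (else (K1) would let $a^\ast$ beat it), whereupon $i$'s plate is forced to be some $(|P_i|+1)$-subset of $S\setminus O$, and by (K2) maximality forces it to be the $<_i$-top such subset, namely $P_i\cup\{a^\ast\}$. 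Hence every optimal opening yields exactly the division of $G'$, proving the base case and with it the Conversion Lemma and the theorem. The main obstacle is precisely this last step: reconciling the knight's ``help the others'' objective with the reversed-greedy outcome, and in particular controlling the case where the opening knight also has later lout turns --- which is exactly what the monotonicity lemma and the ``$P_i$ beats $L$'' observation are built to handle.
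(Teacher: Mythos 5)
Your proposal shares the paper's outer skeleton --- peel off the last knight turn, move it to the end as a lout turn, and induct on the number of knight turns --- but the engine is genuinely different. The paper proves the key step locally: Lemma \ref{switchlemma} transposes the last knight turn with the single lout turn just after it, observing that letting the knight move first adds exactly one option (preempting the lout's favourite morsel), an option that is either equivalent to an existing one (if knight and lout are the same player) or never used under optimal play (if they differ), so Proposition \ref{p.removingedges} applies; iterating the swap pushes the knight into last position, where knight and lout behaviour coincide. You instead analyze the knight-then-all-louts game globally: the coupling/monotonicity lemma for greedy play (available sets nest, so plates obtained from a smaller morsel set are pairwise-dominated), its equality case for leftover removal, and the observation that every element of $P_i$ beats every element of $L$ together show that the opening $a^\ast$ weakly Pareto-dominates every other opening for the players $j\neq i$, and that knight-order maximality then forces each other player to receive exactly his greedy plate and the knight to receive the $<_i$-top $(|P_i|+1)$-subset of $S\setminus O$ --- which is precisely his plate in $G'$. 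I checked these steps, including the delicate case where the opening knight also has later lout turns, and they are sound. Your route is longer but buys an explicit description of the knight's optimal openings and of the resulting division, in the spirit of the Kohler--Chandrasekaran analysis; the paper's swap argument is shorter and never needs to know what greedy play actually produces.

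There is, however, one genuine (though easily repaired) gap: you never reduce from the players' actual preferences to the bare knight orders. Theorem \ref{t.main.full} concerns mixed nature individuals, whose preferences may be arbitrary \emph{extensions} of the knight order --- for instance, utility functions that also distinguish the order in which morsels are received. Your tail-collapsing step asserts that $G$ and $G'$ ``pose identical decision problems'' on turns $1,\dots,t-1$ once each tail subgame is replaced by its unique division. This is correct when the earlier knights carry exactly the knight order, which compares outcomes only through the plates; it is false for a general extension, which may strictly rank two tail plays having the same division, so that maximality need not transfer when the tails of $G$ are exchanged for those of $G'$. This is exactly why the paper's proof opens by invoking Proposition \ref{p.orderequivalence} to pass to bare knight orders (and why its remark after Lemma \ref{switchlemma} stresses that working with the knight order itself, not an extension, is essential there). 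Your proof needs the same first line: prove the statement for knight-order preferences, then note that by Proposition \ref{p.orderequivalence} every equilibrium for the true, extended preferences is also an equilibrium for the knight orders, hence yields the same division.
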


For example, a game with $3$ players taking $6$ moves in the order
  	\[ \textbf{1(knight)},\;\; \textbf{2(knight)}, \;\;2(lout),\;\; 3(lout),\;\; \textbf{3(knight)},\;\; 1(lout) \]
would become
	\[ 2(lout),\;\; 3(lout),\;\; 1(lout),\;\; \textbf{3(lout)},\;\; \textbf{2(lout)},\;\; \textbf{1(lout)}. \]
The original knight turns, in \textbf{bold}, have been pushed to the end, reversed, and converted to lout turns.  This procedure determines a bijection between the turns of the original game and the turns of the resulting lout game.  Corresponding to the unique optimal play of the lout game under this bijection is a play of the original game, which the proof of Theorem \ref{t.main.full} will show is in fact optimal.

\begin{lemma} \label{switchlemma}
Let $G$ be a game of mixed nature individuals.  Assume that each player has preferences given by her knight ordering (rather than some extension of that ordering) but is forced to behave as a lout on her lout turns.  Let $t$ be the last turn on which any player acts as a gallant knight and suppose $t < m$.  Let $G'$ be the game obtained by reversing the turn order of $P_t$ and $P_{t+1}$. Then a food division can be obtained by an optimal play $\aa$ in $G$ if and only if it can be obtained by an optimal play $\aa'$ in $G'$.
\end{lemma}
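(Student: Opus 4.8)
The plan is to reduce the whole statement to a single subgame and there analyze the knight's options directly. Since turns $1,\dots,t-1$ are identical in $G$ and $G'$, and since on a lout turn the decision tree retains only the single greedy edge, the only place the two trees differ is inside the subgame beginning at turn $t$. I would therefore fix a node $x$ reached after the first $t-1$ selections and show that the set of food divisions achievable by optimal play of the subgame rooted at $x$ is the same in $G$ and in $G'$. Equality at the root then follows by propagating this up through the identical top $t-1$ layers: the set of contingent outcomes achievable under subgame-perfect play at a node is a function only of the analogous sets at its children together with the knight order of the player to move, and both of these agree in $G$ and $G'$ above $x$.

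Fix such an $x$, let $R$ be the set of remaining morsels, write $p=P_t$ and $q=P_{t+1}$, and let $d^\ast=\max_{<_q}R$ be lout $q$'s favorite remaining morsel. Because turn $t$ is the last knight turn, turns $t+1,\dots,m$ are all forced lout moves, so once the knight's single choice is fixed the rest of the subgame is deterministic; in particular the knight at $x$ in $G$ faces a menu of deterministic outcomes indexed by her choice $c\in R$, and the forced lout move in $G'$ leads to a knight node facing outcomes indexed by $c'\in R\setminus\{d^\ast\}$. The key computation is that for every $c\in R\setminus\{d^\ast\}$ these two outcomes coincide: when the $G$-knight takes $c\neq d^\ast$ she leaves $d^\ast$ available, so $q$ still eats $d^\ast$ and the remaining louts act on $R\setminus\{c,d^\ast\}$ in both games. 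Thus the $G'$-menu is exactly the $G$-menu minus the single extra option $D_0$ in which the knight takes $q$'s favorite $d^\ast$, demoting $q$ to its second favorite $d^{\ast\ast}=\max_{<_q}(R\setminus\{d^\ast\})$.

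It then remains to show that deleting $D_0$ does not change the set of knight-maximal divisions. Comparing $D_0$ with the division $D^{\ast\ast}$ in which the knight instead takes $d^{\ast\ast}$, one sees that these differ only in that $p$ and $q$ exchange the morsels $d^\ast$ and $d^{\ast\ast}$, the tail louts acting on the same set $R\setminus\{d^\ast,d^{\ast\ast}\}$ in both. Since $d^\ast>_q d^{\ast\ast}$, player $q$ (who is $\neq p$) is strictly better off in $D^{\ast\ast}$ while every player other than $p$ and $q$ receives an identical plate, so $D_0<D^{\ast\ast}$ in $p$'s knight order by (K1) (in the degenerate case $p=q$ the knight simply collects both morsels either way and $D_0=D^{\ast\ast}$). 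Hence $D_0$ is weakly dominated by an option present in both games. Using that the knight order is transitive — which rests on the transitivity of the pairwise-comparison order $\leq_j$ on plates — any division dominated by $D_0$ is already dominated by $D^{\ast\ast}$, so $D_0$ can be neither a new maximal element nor the cause of some other option ceasing to be maximal. The two menus therefore have the same maximal divisions, giving $V_G(x)=V_{G'}(x)$.

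I expect the main obstacle to be the bookkeeping of the reduction itself: making precise that optimality propagates correctly up the layers above $x$, i.e., that the achievable-outcome set at an internal node depends on its subtrees only through their achievable-outcome sets and the moving player's (possibly partial) order, and verifying the transitivity of the knight order needed in the last step. By contrast the combinatorial heart — isolating $D_0$ as the unique discrepancy between the two menus and showing that a gallant knight never gains by exercising it — is short once that transitivity is available.
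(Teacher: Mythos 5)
Your proposal is correct and takes essentially the same route as the paper: you isolate the single extra option available to the knight when she moves before the lout (taking the lout's favourite $d^\ast$, yielding the pair the paper calls $(a_2,a_1)$), show it is strictly dominated under (K1) by taking $d^{\ast\ast}$ instead, and dispose of the $p=q$ case by noting both options give the same division. The only difference is bookkeeping: where the paper invokes Proposition \ref{p.removingedges} to delete the dominated edge, you prove the needed instance of that proposition by hand, via the maximal-elements/transitivity argument on the knight's menu and explicit propagation of the value sets up the identical top $t-1$ layers.
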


\begin{proof}
Consider the pair of turns $t$ and $t+1$ as a single turn that results in a pair of morsels $(a,b)$ being distributed, the first to the boorish lout and the second to the gallant knight.  Since the lout's decision is deterministic, it is the knight's decision that determines the pair $(a,b)$.  If the knight goes second, then the lout will always take her favourite morsel from those remaining on turn $t$ (call it $a_1$) and the knight can choose any $b$ from the remaining available morsels.  If the knight goes first, he can also choose any such $b$ (in which case the lout chooses $a_1$).  However, the knight has another option in this case: he can choose $a_1$ (which would cause the lout to choose a second favourite option, $a_2$).

If the knight and lout are actually the {\em same player} then the options $(a_1,a_2)$ and $(a_2,a_1)$ are equivalent, so this does not actually correspond to an additional choice.  If the knight and lout correspond to different players, then the choice $(a_2,a_1)$ is always less desirable to the knight than $(a_1,a_2)$, since it results in the lout getting a less desirable option during these two turns (and since only louts remain, it does not affect the way food is distributed in subsequent turns).  Thus, having the knight move before the lout adds a choice for the knight (to each node of the decision tree corresponding to this combined move); but the added choice is one that the knight will never use under optimal play in any subgame, so (by Proposition \ref{p.removingedges}) adding or removing this choice has no effect on the way food is divided under equilibrium strategies.

To phrase this differently, we have shown that $\aa$ is optimal for $G$ if and only if $\aa'$ is optimal for $G'$, where $\aa'$ is the sequence obtained from $\aa$ by transposing $a_t$ and $a_{t+1}$, \emph{unless} $P_t = P_{t+1}$ and $a_t$ is $P_t$'s favourite remaining morsel, in which case $\aa'=\aa$.  We remark that it was important in this argument that all players had only the knight ordering on their preferences (rather than an extension of that ordering), since is this is the reason that we could treat the options $(a_1,a_2)$ and $(a_2,a_1)$ as equivalent.
\end{proof}

\begin{proof}[Proof of Theorem \ref{t.main.full}]
By Proposition~\ref{p.orderequivalence}, any equilibrium for players with extensions of the knight ordering is also an equilibrium for players with the knight ordering.
  By Lemma \ref{switchlemma}, for any optimal play $\aa$ of $G$, there is an optimal play $\aa'$ of the game $G'$ obtained by pushing the last knight turn to the end, such that $\aa$ and $\aa'$ lead to the same division of the food.  But when a gallant knight moves last, she behaves the same way she would behave if she were a boorish lout:  she takes her favourite morsel remaining.  Thus, $\aa'$ remains an optimal play if we convert her into a lout. Repeating this procedure with each knight in turn yields a game of only louts, which has a unique optimal play.
  \end{proof}

\begin{remark} The proof still works if the dinner table is a multiset, that is, it may contain multiple identical copies of some morsels (five identical tempura yam rolls, say). The only change to the argument is that, in the proof of Lemma \ref{switchlemma}, one must consider the case that the lout has more than one identical favourite item available at step $t$; but in this case, the lout will choose an item of that type regardless of the way $P_{t+1}$ and $P_t$ are ordered, so it does not matter which of the two goes first.

Note, however, that Lemma \ref{switchlemma} relies on the fact that when only lout turns remain, the play is deterministic.  For this reason, the proof does not apply to the more general setting in which each player has only a partial preference ordering on the morsels.  If some player regards two morsels as incomparable while another player has a preference between them, then on a lout turn, the former's choice may alter the outcome for the latter.  
\end{remark}

\subsection*{Another interpretation: games of group decision making}

\begin {figure}[!ht]
\centering
\includegraphics [width=5in]{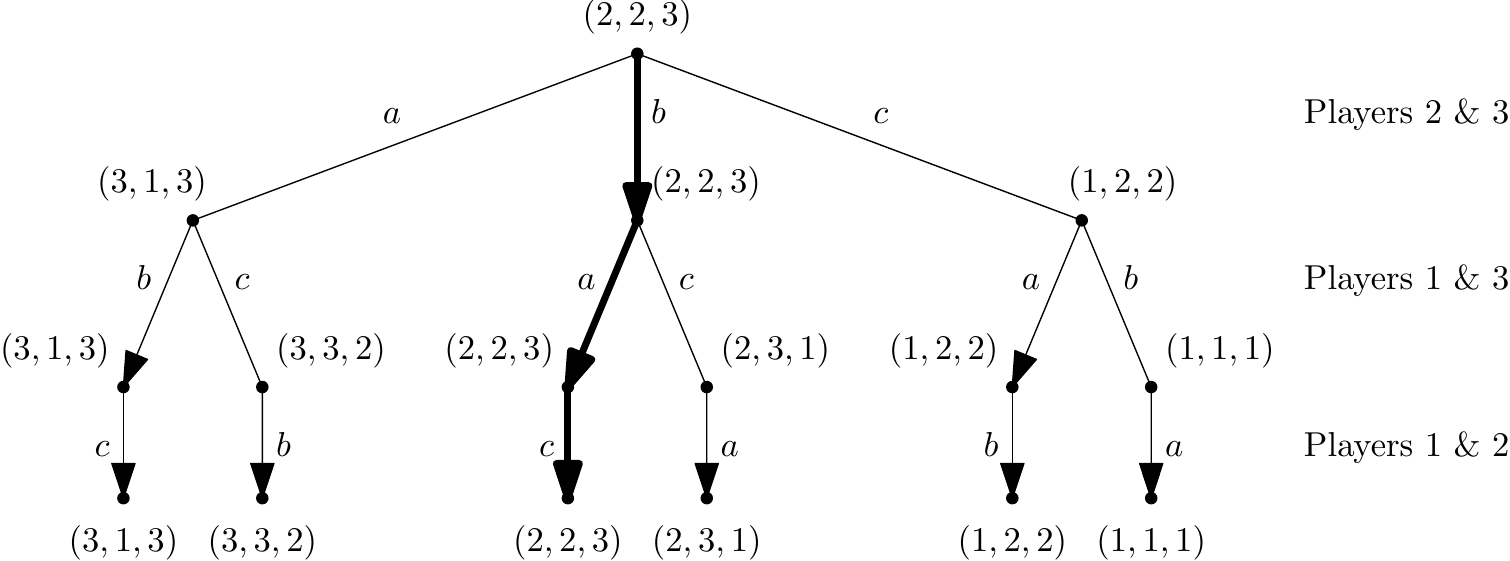}
\caption {\label{fig:exgame} An example group decision game tree for $3$ players and $3$ morsels.  The enjoyment vectors for the morsels $a$, $b$, and $c$ are $(0,1,2)$, $(1,2,1)$ and $(2,0,0)$ respectively.  Each vertex is labeled by the enjoyment vector of the contingent outcome determined by the equilibrium.  Each row is labeled with the players to agree upon the move at that row.  Each edge is labeled with the corresponding morsel they share.  Bold edges show the optimal play $b,a,c$.}
\label{f.group}
\end {figure}

Consider an alternate version of the dinner in which players take turns speaking.   While a player is busy speaking, the remaining $k-1$ players jointly select one of the morsels and share it among themselves.  Play continues until all food has been consumed (i.e., we assume the number $m$ of turns equals the number $n$ of morsels).  Different groups of $k-1$ players may have different methods of agreeing on a morsel to select as a group.  Let us assume only that their method is such that their choice is always {\em Pareto efficient} in the sense that they will prefer play sequence $\aa$ to $\bb$ if $\aa$ gives each member of the group an equal or more enjoyable plate (and at least one member a more enjoyable plate).

With these preferences, the game just described is equivalent (after reversing the enjoyment rankings for each player) to having the speaking player be a gallant knight who decides on a morsel to eat for himself.  To see the equivalence, note that if player $i$ partakes in a set of morsels $A_i$ in the group interpretation, that means he receives the complement $A_i^c$ in the knight interpretation.  The pairwise comparison ordering on sets $A_i$ corresponds to the comparison ordering on sets $A_i^c$ when the morsel ranking is reversed, and the Pareto ordering described above is exactly the knight ordering (here we use that $m=n$, so that axiom (K2) is irrelevant).

The tree for a group decision dinner of three players is shown in Figure \ref{f.group}, along with the ``enjoyment vector'' labeling shown as in Figure \ref{f.nonunique}.  Let us call an equilibrium (specified by a collection of strategies, one for each group) \emph{conflict-free} if no group can change a decision at a node unilaterally so that {\em any one} member of the group gets a more enjoyable plate (even if another member gets a less enjoyable plate).  Theorem~\ref{t.main} shows the following.

\begin{theorem}
For the group decision dinner, every subgame-perfect equilibrium is conflict-free, and all optimal plays result in the same division of the food.
  \label{t.group}
\end{theorem}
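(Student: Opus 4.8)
The plan is to exploit the equivalence, already spelled out in the text, between the group decision dinner and a knight game: if in the group game player $i$ partakes in the plate $A_i$, then in the knight game (played with each ranking $<_i$ reversed) he eats the complementary plate $A_i^c$, and under this dictionary each group's Pareto ordering is exactly the speaking player's knight ordering (here $m=n$, so axiom (K2) never intervenes). Since each group's actual method of agreeing is assumed only to \emph{extend} the Pareto ordering, Proposition~\ref{p.orderequivalence} tells us that every subgame-perfect equilibrium of the group game is also an equilibrium for the underlying knight ordering. Theorem~\ref{t.main.full} then applies verbatim to the knight game and shows that all of its optimal plays yield the same division of the food; transporting this back through the complement dictionary gives the second assertion, that all optimal plays of the group decision dinner produce the same division. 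For the rest write $A_j^*(v)$ for the (now well-defined) plate that player $j$ receives under the unique division of the subgame rooted at a node $v$.

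The first thing I would do for conflict-freeness is to reduce it to a clean statement about these unique plates. Fix a node $x$ at which the speaker is $i=P_t$ and the deciding group is $G=\{1,\dots,k\}\setminus\{i\}$, and let $s$ be a subgame-perfect equilibrium. Redirecting the group's arrow at $x$ to a child $y$, while leaving every other arrow of $s$ untouched, changes the contingent outcome of $x$ into the contingent outcome of $y$; since the strategies below $y$ are unchanged they still form an equilibrium of the subgame rooted at $y$, so by the uniqueness just established each player $j$ receives the plate $A_j^*(y)$ there, while the contingent outcome of $x$ itself delivers $A_j^*(x)$. Hence $s$ is conflict-free at $x$ exactly when $A_j^*(y)\leq_j A_j^*(x)$ for every child $y$ and every $j\in G$. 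Conflict-freeness is thus equivalent to the purely combinatorial assertion that the parent's unique-division plate dominates, simultaneously for every group member, the unique-division plate arising from each possible morsel choice.

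To prove that domination I would use the explicit "reverse the turn order and play as louts" description of the unique division supplied by Theorem~\ref{t.main.full}. Translated into the group picture it reads: letting $R$ be the set of morsels still uneaten at $x$, process the turns in reverse order $P_m,\dots,P_t$, each speaker deleting his least enjoyable remaining morsel, so that $j$'s final plate is $R$ with $j$'s own deleted morsels removed. Let $c^{\mathrm{full}}$ be the morsel that the speaker $i$ deletes last in this process. The key lemma I would prove is that $c^{\mathrm{full}}$ is never any later speaker's least enjoyable morsel at the moment he acts, so removing it from the pool does not disturb the reverse-greedy deletions on turns $t+1,\dots,m$; consequently the child in which the group chooses $c=c^{\mathrm{full}}$ hands each $j\in G$ exactly the plate $A_j^*(x)$. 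It then remains to show that every other choice $c\neq c^{\mathrm{full}}$ gives each $j\in G$ a plate $A_j^*(y)\leq_j A_j^*(x)$, which I would establish by an exchange argument: deleting from the continuation a morsel that the entire group eats can only force some players to delete more enjoyable morsels than they otherwise would, never less enjoyable ones.

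The main obstacle is precisely this last exchange step. Pareto-maximality of the equilibrium choice (which is automatic from the knight ordering) does not by itself produce a single morsel that is simultaneously optimal for all of $G$, so the whole content of the theorem—the surprising fact that such a simultaneously optimal selection exists—lives in showing that the reverse-greedy deletion gives each player a pairwise-comparison-minimal deleted set and that this minimality survives the removal of one commonly-eaten morsel. Once the inequalities $A_j^*(y)\leq_j A_j^*(x)$ are in hand for all children $y$ and all $j\in G$, conflict-freeness follows from the reduction above, and, together with the uniqueness of the division obtained in the first paragraph, completes the proof.
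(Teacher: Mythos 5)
Your first paragraph is correct and is exactly the paper's route for the second assertion: the complement dictionary, Proposition~\ref{p.orderequivalence}, and Theorem~\ref{t.main.full} together show that all optimal plays of the group dinner give the same division. Your reduction of conflict-freeness to the plate inequalities $A_j^*(y)\leq_j A_j^*(x)$ is also sound (though ``exactly when'' overstates it: conflict-freeness only requires that no member do strictly better at a child, which is weaker than domination). The genuine gap is that you never prove those inequalities. The ``exchange argument'' about reverse-greedy deletions is only sketched, and you yourself flag it as the main obstacle; one would have to show that removing a commonly-eaten morsel from the pool perturbs all of the later greedy deletions in a single consistent direction for every player simultaneously, and no such argument is given. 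As it stands, the conflict-freeness half of the theorem is not established.

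What you missed is that no new combinatorics is needed: the domination you want is a soft consequence of the uniqueness statement you had already proved, applied to subgames, together with the definition of equilibrium. Fix an equilibrium $s$ and a node $x$ whose contingent outcome is $\aa$, and let $\bb$ be the contingent outcome of a child $y$. Since $\aa$ is maximal in $O_s(x)$ for the group's Pareto order, either $\bb$ is Pareto-dominated by $\aa$, in which case every group member does weakly worse at $y$, or $\bb$ is Pareto-incomparable to $\aa$. In the incomparable case, $\bb$ must itself be maximal in $O_s(x)$: if $\bb$ were dominated by some maximal $\mathbf{c}\in O_s(x)$, then redirecting the arrow at $x$ toward $\mathbf{c}$'s child gives another equilibrium of the subgame at $x$ (nothing below $x$ changed and $\mathbf{c}$ is maximal), so by uniqueness $\mathbf{c}$ assigns everyone the same plates as $\aa$, and domination by $\mathbf{c}$ would then be domination by $\aa$, contradicting incomparability. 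Hence redirecting the arrow at $x$ to $y$ is again an equilibrium of the subgame at $x$, and uniqueness forces $\bb$ to yield exactly the plates $A_j^*(x)$, so again no member improves. This is precisely the paper's proof: it deduces conflict-freeness directly from Theorem~\ref{t.main} applied to the subgame beginning at $x$, noting that an alternative better for one member and worse for another would be Pareto-incomparable to $\aa$ and would therefore have to give the same division, a contradiction. So even if your exchange lemma were completed, it would re-derive by hand a fact that already follows from the part of the theorem you had in hand; your claim that ``the whole content of the theorem'' lives in the exchange step is a misdiagnosis --- it lives in Theorem~\ref{t.main.full}, which you had already invoked.
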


The statement that every subgame-perfect equilibrium is conflict-free means that as we construct an equilibrium inductively, at each node there exists at least one choice that is optimal for {\em all} of the players in the corresponding group.  This is immediate from Theorem~\ref{t.main}, which implies that if we are given any equilibrium leading to some given outcome $\aa$, then all of the options at a node $x$ leading to outcomes that are incomparable to $\aa$ (with respect to the group's Pareto ordering) must lead to the same division of food.  If one of these alternative outcomes was better for one player and worse for another player, then it would have to lead to a different division of a food, contradicting Theorem~\ref{t.main} (applied to the subgame beginning at $x$).

\end{document}